\documentclass[twocolumn]{article}

\usepackage{amssymb}
\usepackage{amsmath}
\usepackage{graphicx}
\usepackage{subfig}
\usepackage[english]{babel}
\usepackage{dsfont}

\usepackage{float}
\restylefloat{table}
\usepackage{booktabs}

\newtheorem{theorem}{Theorem}

\newtheorem{lemma}{Lemma}

\newenvironment{proof}{\paragraph{Proof:}}{\hfill$\square$}

\usepackage{authblk}

\title{Distance domination, guarding and vertex cover for maximal outerplanar graph}
\author[1]{Santiago Canales}
\author[2]{Gregorio Hern\'andez}
\author[3]{Mafalda Martins\thanks{mafalda.martins@ua.pt}}
\author[3,4]{In\^es Matos}
\affil[1]{Universidade de Aveiro, Portugal}
\affil[2]{Universidad Pontificia Comillas de Madrid, Spain}
\affil[3]{Universidad Polit\'ecnica de Madrid, Spain}
\affil[4]{Universitat Polit\`ecnica de Catalunya, Spain}

\date{}

\begin{document}

\maketitle

\begin{abstract}
This paper discusses a distance guarding concept on triangulation graphs, which can be associated with distance domination and distance vertex cover. We show how these subjects are interconnected and provide tight bounds for any \mbox{$n$-vertex} maximal outerplanar graph: the $2d$-guarding number, $g_{2d}(n) = \lfloor \frac{n}{5} \rfloor$; the $2d$-distance domination number, $\gamma_{2d}(n) = \lfloor \frac{n}{5} \rfloor$; and the $2d$-distance vertex cover number, $\beta_{2d}(n) = \lfloor \frac{n}{4} \rfloor$.
\end{abstract}

\maketitle

\section{Introduction}

Domination, covering and guarding are widely studied subjects in graph theory. Given a graph $G=(V,E)$ a \emph{dominating set} is a set \mbox{$D \subseteq V$} of vertices such that every vertex not in $D$ is adjacent to a vertex in $D$. The \emph{domination number} $\gamma(G)$ is the number of vertices in a smallest dominating set for $G$. A set \mbox{$C \subseteq V$} of vertices is a \emph{vertex cover} if each edge of the graph is incident to at least one vertex of the set.  The \emph{vertex cover number} $\beta(G)$ is the size of a minimum vertex cover. Thus, a dominating set guards the \emph{vertices} of a graph while a vertex cover guards its \emph{edges}. In plane graphs, these concepts differ from the notion of \emph{guarding set} as the latter guards the \emph{faces} of the graph. Let \mbox{$G=(V,E)$} be a plane graph, a guarding set is a set $S \subseteq V$ of vertices such that every face has a vertex in $S$. The \emph{guarding number} $g(G)$ is the number of vertices in a smallest guarding set for $G$.

There are many papers and books about domination and its many variants in graphs, e.g. \cite{Campos13,Haynes98,King10,Matheson96}. In 1975, domination was extended to \emph{distance domination} by Meir and Moon \cite{Meir75}. Given a graph $G$, a set $D \subset V$ of vertices is said to be a \emph{distance \mbox{$k$-dominating} set} if for each vertex \mbox{$u \in V-D$}, \mbox{$dist_G(u,v) \leq k$} for some \mbox{$v \in D$}. The minimum cardinality of a distance \mbox{$k$-dominating} set is said to be the \emph{distance \mbox{$k$-domination} number} of $G$ and is denoted by $\gamma_{k}(G)$ or $\gamma_{kd}(G)$. Note that a classical dominating set is a distance $k$-dominating set at distance 1. In the case of distance domination, there are also some known results concerning bounds for $\gamma_{kd}(G)$, e.g., \cite{Sridharan02,Tian04,Tian09}.  However, if graphs are restricted to triangulations, then we are not aware of known bounds for $\gamma_{kd}(G)$. The distance domination was generalized to \emph{broadcast domination}, by Erwin, when the power of each vertex may vary \cite{Erwin04}. Given a graph \mbox{$G = (V,E)$}, a \emph{broadcast} is a function \mbox{$f : V \rightarrow \mathds{N}_0$}.  The cost of a broadcast $f$ over a set $S$ of $V$ is defined as \mbox{$f(S) = \sum_{v \in S} f(v)$}. Thus, $f(V)$ is the total cost of the broadcast function $f$. A broadcast is \emph{dominating} if for every vertex $v$, there is a vertex $u$ with \mbox{$f(u) > 0$} and \mbox{$d(u, v) \leq f(u)$}, that is, a vertex $u$ with non null broadcast and whose broadcast's power reaches vertex $v$. A dominating broadcast $f$ is \emph{optimal} if $f(V)$ is minimum over all choices of broadcast dominating functions for $G$. The \emph{broadcast domination problem} consists in building this optimal function. Note that, if $f(V)=\{0,1\}$, then the  broadcast domination problem coincides with the problem of finding a minimum dominating set with minimum cardinality. And, if $f(V)=\{0,k\}$, then the broadcast domination problem is the distance \mbox{$k$-dominating} problem. If a broadcast $f$ provides coverage to the edges of $G$ instead of covering its vertices, then we have a generalization of the vertex cover concept \cite{Blair05}. A broadcast $f$ is \emph{covering} if for every edge \mbox{$(x,y) \in E$}, there is a path $P$ in $G$ that includes the edge $(x,y)$ and one end of $P$ must be a vertex $u$, where $f(u)$ is at least the length of $P$. A covering broadcast $f$ is \emph{optimal} if $f(V)$ is minimum over all choices of broadcast covering functions for $G$. Note that, if \mbox{$f(V)=\{0,1\}$}, then the broadcast cover problem coincides with the problem of finding a minimum vertex cover. Regarding the broadcast cover problem when all vertices have the same power (i.e., when \mbox{$f(V)=\{0,k\}$}, for a fixed \mbox{$k \neq 1$}), as far as we know, there are no published results besides \cite{Chen12} where the authors propose a centralized and distributed approximation algorithm to solve it.

The guarding concept on plane graphs has its origin in the study of triangulated terrains, polyhedral surfaces whose faces are triangles and with the property that each vertical line intersects the surface at most by one point or segment. A set of guards covers the surface of a terrain if every point on the terrain is visible from at least one guard in the set. The combinatorial aspects of the terrain guarding problems can be expressed as guarding problems on the plane triangulated graph underlying the terrain. Such graph is called \emph{triangulation graph} (\emph{triangulation}, for short), because is the graph of a triangulation of a set of points in the plane (see Figures \ref{FIG:article-arXiv-1} and \ref{FIG:article-arXiv-2}). In this context of guarding for plane graphs, a set of guards only needs to watch the bounded faces of the graph. There are known bounds on the guarding number of a plane graph, $g(G)$; for example, $g(G) \leq \frac{n}{2}$ for any $n$-vertex plane graph \cite{Bose97}, and $g(G) \leq \frac{n}{3}$ for any triangulation of a polygon \cite{Fisk78}. The triangulation of a polygon is a \emph{maximal outerplanar graph}. A graph is outerplanar if it has a crossing-free embedding in the plane such that all vertices are on the boundary of its outer face (the unbounded face). An outerplanar is maximal outerplanar if it is not possible to add an edge such that the resulting graph is still outerplanar. A maximal outerplanar graph embedded in the plane as mentioned above is an maximal outerplanar graph and corresponds to a triangulation of a polygon. Contrary to the notions of domination and vertex cover on plane graphs that were extended to include their distance versions, the guarding concept was not generalized to its distance version.

In this paper we generalize the guarding concept on plane graphs to its distance guarding version and also formalize the broadcast cover problem when all vertices have the same power, which we call distance k-vertex cover. Furthermore, we analyze these concepts of distance guarding, covering and domination, from a combinatorial point of view, for triangulation graph. We obtain tight bounds for distance versions of guarding, domination and vertex covering for maximal outerplanar graphs.

In the next section we first describe some of the terminology used in this paper, and then discuss the relationship between distance guarding, domination and covering on triangulation graphs. In sections \ref{SEC:Domination_and_DistanceTightUpperBounds} and \ref{SEC:CoveringTightUpperBounds} we study how these three concepts of distance apply to maximal outerplanar graphs. And finally, the paper concludes with section 5 that discusses our results and
future research.

\section{Relationship between distance guarding, distance domination and distance vertex cover on triangulation graphs}
\label{SEC:BasicNotions}

In the following we introduce some of the notation used throughout the text, and then proceed to explain the relationship between the different distance concepts on triangulations. Given a triangulation \mbox{$T=(V,E)$}, we say that a bounded face $T_i$ of $T$ (i.e., a triangle) is \emph{$kd$-visible} from a vertex \mbox{$p \in V$}, if there is a vertex \mbox{$x \in T_i$} such that \mbox{$dist_T(x,p) \leq k-1$}. The \emph{$kd$-visibility region} of a vertex \mbox{$p \in V$} comprises the triangles of $T$ that are \mbox{$kd$-visible} from $p$ (see Fig. \ref{FIG:article-arXiv-1}).

\begin{figure}[!htb]
\centering
\includegraphics[scale=0.63]{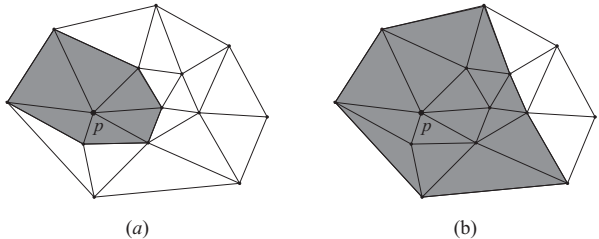}
\caption{The $kd$-visible region of $p$ for: (a) $k=1$; (b) $k=2$.} \label{FIG:article-arXiv-1}
\end{figure}

A \emph{$kd$-guarding set} for $T$ is a subset \mbox{$F \subseteq V$} such that every triangle of $T$ is \mbox{$kd$-visible} from an element of $F$. We designate the elements of $F$ by \emph{$kd$-guards}. The \emph{$kd$-guarding number} $g_{kd}(T)$ is the number of vertices in a smallest \mbox{$kd$-guarding} set for $T$. Note that, to avoid confusion with \emph{multiple guarding} \cite{Belleville09} -- where the typical notation is \mbox{$k$-guarding} -- we will use \mbox{$kd$-guarding}, with an extra ``$d$''. Given a set $S$ of $n$ points, we define
$g_{kd}(S) = max \{g_{kd}(T): T \mbox{ is triangulation with } \mbox{V=S}\}$
and given \mbox{$n \in \mathds{N}$}, $g_{kd}(n) = max \{g_{kd}(S): S \mbox{ is plane point set with } \allowbreak |S|=n\}$.

A \emph{\mbox{$kd$-vertex} cover} for $T$, or \emph{distance \mbox{$k$-vertex} cover} for $T$, is a subset \mbox{$C \subseteq V$} such that for each edge \mbox{$e \in E$} there is a path of length at most $k$, which contains $e$ and a vertex of $C$. The \emph{\mbox{$kd$-vertex} cover number} $\beta_{kd}(T)$ is the number of vertices in a smallest \mbox{$kd$-vertex} cover set for $T$. Given a set $S$ of $n$ points, we define
$\beta_{kd}(S) = max \{\beta_{kd}(S): T \mbox{ is triangulation with } \mbox{V=S}\}$ and given \mbox{$n \in \mathds{N}$}, $\beta_{kd}(n) = max \{\beta_{kd}(S): S \mbox{ is plane point set with } \allowbreak |S|=n\}.$

Finally, as already defined by other authors, a \emph{$kd$-dominating set} for $T$, or \emph{distance $k$-dominating set} for $T$, is a subset \mbox{$D \subset V$} such that each vertex \mbox{$u \in V-D$}, \mbox{$dist_T(u,v) \leq k$} for some \mbox{$v \in D$}. The $kd$-domination number $\gamma_{kd}(T)$ is the number of vertices in a smallest $kd$-dominating set for $T$. Given a set $S$ of $n$ points, we define
$\gamma_{kd}(S) = max \{\gamma_{kd}(T): T \mbox{ is triangulation with } mbox{V=S}\}$
and  given \mbox{$n \in \mathds{N}$},
$\gamma_{kd}(n) \allowbreak = \allowbreak max \{\gamma_{kd}(S): \allowbreak S \allowbreak \mbox{ is plane point set with } \allowbreak |S|=n\}$.

\begin{figure}[!htb]
\centering
\includegraphics[scale=0.5]{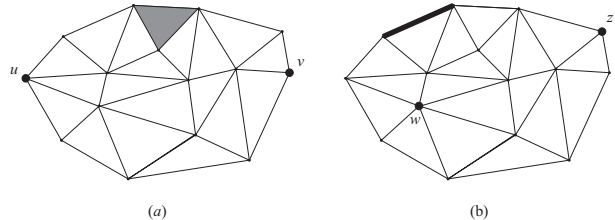}
\caption{(a) $2d$-dominating set for a triangulation $T$; (b) $2d$-guarding set for $T$.} \label{FIG:article-arXiv-2}
\end{figure}

The main goal is to obtain bounds on $g_{kd}(n)$, $\gamma_{kd}(n)$ and $\beta_{kd}(n)$.  We start by showing that the three concepts, $kd$-guarding, $kd$-dominance and $kd$-vertex covering are different. Fig. \ref{FIG:article-arXiv-2} depicts \mbox{$2d$-dominating} and \mbox{$2d$-guarding} sets for a given triangulation $T$. Note that in Fig. \ref{FIG:article-arXiv-2}(a) the set $\{u,v\}$ is $2d$-dominating since the remaining vertices are at distance 1 or 2. However, it is not a \mbox{$2d$-guarding} set because the shaded triangle is not guarded, as its vertices are at distance 2 from $\{u,v\}$. In \ref{FIG:article-arXiv-2}(b) $\{w,z\}$ is a $2d$-guarding set, however it is not a \mbox{$2d$-vertex} cover  since any path between the bold edge and $w$ or $z$ has length at least 3. Therefore, the bold edge is not covered.

Now we are going to establish a relation between $g_{kd}(T)$, $\gamma_{kd}(T)$ and $\beta_{kd}(T)$.

\begin{lemma}
If  $C$ is a $kd$-vertex cover for a triangulation $T$, then $C$ is a $kd$-guarding set and a $kd$-dominating set for $T$.
\end{lemma}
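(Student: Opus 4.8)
The plan is to show the two implications separately, using the definitions directly. Let $C$ be a $kd$-vertex cover for the triangulation $T=(V,E)$.

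First I would show $C$ is a $kd$-dominating set. Take any vertex $u \in V - C$. Since $T$ is a triangulation (in particular connected with at least one edge at every vertex), pick an edge $e = (u,x)$ incident to $u$. By the definition of a $kd$-vertex cover, there is a path $P$ of length at most $k$ that contains $e$ and contains some vertex $v \in C$. Since $e$ is incident to $u$, the vertex $u$ is an endpoint of $P$ or lies on $P$ adjacent to the edge $e$; in either case $\mathrm{dist}_T(u,v)$ is at most the length of $P$, hence at most $k$. (If $v = u$ this would contradict $u \notin C$, so in fact $v \neq u$ and $u$ is genuinely dominated at distance $\le k$.) Thus every vertex outside $C$ is within distance $k$ of a vertex of $C$, so $C$ is $kd$-dominating.

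Next I would show $C$ is a $kd$-guarding set. Let $T_i$ be any bounded face (triangle) of $T$, and pick an edge $e = (a,b)$ of $T_i$. Again by the $kd$-vertex cover property there is a path $P$ of length at most $k$ containing $e$ and some $v \in C$. One endpoint of $P$ is at distance at most $k-1$ from the far endpoint of $e$ along $P$: concretely, writing $P$ as a sequence of vertices with $e$ as one of its edges, the vertex $v$ lies on $P$ and at least one of $a,b$ is at distance at most $k-1$ from $v$ (since $P$ has at most $k$ edges and $e$ is one of them, removing $e$ leaves a subpath of length at most $k-1$ joining $v$ to one of $a$ or $b$). That endpoint of $e$ is a vertex of $T_i$, so $T_i$ is $kd$-visible from $v \in C$. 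Hence every triangle is $kd$-visible from an element of $C$, and $C$ is a $kd$-guarding set.

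The only real subtlety — and the step I would be most careful about — is the bookkeeping on path lengths: translating ``there is a path of length at most $k$ through $e$ and a vertex of $C$'' into ``some endpoint of $e$ is within distance $k-1$ of that vertex'' and ``$u$ is within distance $k$ of that vertex.'' This hinges on the fact that the path has $e$ as one of its edges, so after deleting $e$ the remaining portion between the $C$-vertex and the relevant endpoint has at most $k-1$ edges; there is no genuine obstacle here, just a need to state the endpoint/length argument cleanly so it matches the definitions of $kd$-visibility (distance $\le k-1$ to a vertex of the triangle) and $kd$-domination (distance $\le k$).
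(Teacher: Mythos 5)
Your proof is correct and follows essentially the same route as the paper: both arguments simply unwind the definition of a $kd$-vertex cover to show that some endpoint of each edge lies within distance $k-1$ of $C$ (giving $kd$-guarding) and that every vertex lies within distance $k$ of $C$ (giving $kd$-domination). Your version just spells out the path-splitting bookkeeping that the paper leaves implicit.
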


\begin{proof}

If $C$ is a $kd$-vertex cover, then each edge of $T$ has one of its endpoints at distance at most \mbox{$k-1$} from $C$. Thus, any triangle of $T$ has one of its vertices at distance at most \mbox{$k-1$} from $C$, that is, $C$ is a $kd$-guarding set for $T$. Furthermore, all the vertices of $T$ are at a distance of at most $k$ from a vertex of $C$. Therefore $C$ is $kd$-dominant.

\end{proof}

\begin{lemma}
If $C$ is a $kd$-guarding set for a triangulation $T$, then $C$ is a $kd$-dominating set for $T$.
\end{lemma}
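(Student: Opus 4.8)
The plan is to show every vertex of $T$ is within distance $k$ of $C$, using the fact that $C$ guards all triangles. First I would take an arbitrary vertex $u \in V - C$ and consider any bounded face (triangle) $T_i$ of $T$ that is incident to $u$; such a triangle exists because $T$ is a triangulation, so every vertex lies on at least one bounded face. Since $C$ is a $kd$-guarding set, $T_i$ is $kd$-visible from some $p \in C$, meaning there is a vertex $x \in T_i$ with $dist_T(x,p) \leq k-1$.

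The key step is then a short triangle-inequality argument: because $u$ and $x$ are both vertices of the same triangle $T_i$, either $u = x$ or $u$ and $x$ are joined by an edge, so $dist_T(u,x) \leq 1$. Combining this with $dist_T(x,p) \leq k-1$ gives $dist_T(u,p) \leq dist_T(u,x) + dist_T(x,p) \leq 1 + (k-1) = k$. Hence every vertex $u \in V - C$ is at distance at most $k$ from a vertex of $C$, which is precisely the definition of $C$ being a $kd$-dominating set for $T$.

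I do not anticipate a genuine obstacle here; the only point requiring a little care is the observation that an isolated vertex could fail to be incident to a bounded face, but in a triangulation on at least three vertices every vertex is a corner of some triangle, so this edge case does not arise. (Alternatively, one could skip the geometric argument entirely and simply invoke Lemma 1 together with Lemma 2's hypothesis — but since Lemma 1 is about $kd$-vertex covers rather than $kd$-guarding sets, the direct argument above is the cleanest route.) The whole proof is therefore two or three lines once the incident-triangle observation is in place.
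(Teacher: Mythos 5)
Your proof is correct and follows the same route as the paper's: the paper's argument is exactly that every vertex lies on some guarded triangle, hence is within distance $1+(k-1)=k$ of a guard; you have merely spelled out the triangle-inequality step that the paper leaves implicit. No discrepancy to report.
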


\begin{proof}

If $F$ is $kd$-guarding set, then every vertex of $T$ (which belongs to a $kd$-guarded face) is at distance at most $k$ from an element of $F$. Thus, $F$ is $kd$-dominating set for $T$.

\end{proof}

The previous lemmas prove the following result.

\begin{theorem}
\label{Thm:inequalities}
Given a triangulation $T$ the minimum cardinality $g_{kd}(T)$ of any \mbox{$kd$-guarding} set for $T$ verifies

\begin{equation}
\gamma_{kd}(T) \leq g_{kd}(T) \leq \beta_{kd}(T).
\end{equation}

\end{theorem}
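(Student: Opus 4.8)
The plan is to derive Theorem~\ref{Thm:inequalities} as an immediate consequence of the two preceding lemmas, so the proof is essentially a two-line chain of implications together with a pointer to the definitions of the three parameters as minimum cardinalities. First I would fix a triangulation $T$ and recall that $\beta_{kd}(T)$, $g_{kd}(T)$ and $\gamma_{kd}(T)$ are, by definition, the minimum sizes of a $kd$-vertex cover, a $kd$-guarding set, and a $kd$-dominating set for $T$, respectively.

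For the right-hand inequality $g_{kd}(T) \le \beta_{kd}(T)$, I would take $C$ to be a minimum $kd$-vertex cover, so $|C| = \beta_{kd}(T)$. By Lemma~1, $C$ is also a $kd$-guarding set for $T$, hence the minimum $kd$-guarding set has size at most $|C|$; that is, $g_{kd}(T) \le |C| = \beta_{kd}(T)$. For the left-hand inequality $\gamma_{kd}(T) \le g_{kd}(T)$, I would take $F$ to be a minimum $kd$-guarding set, so $|F| = g_{kd}(T)$. By Lemma~2, $F$ is a $kd$-dominating set for $T$, hence $\gamma_{kd}(T) \le |F| = g_{kd}(T)$. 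Chaining the two gives $\gamma_{kd}(T) \le g_{kd}(T) \le \beta_{kd}(T)$, which is the claimed inequality~(1).

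There is no real obstacle here: the only thing to be slightly careful about is the logical direction — a witnessing set for the ``stronger'' notion (vertex cover, then guarding) is automatically a witnessing set for the ``weaker'' one (guarding, then domination), and it is the minimum over a \emph{larger} family of admissible sets that is therefore no larger. I would also remark in passing that the same argument, applied not to minimum sets but to the point-set and $n$-vertex maxima, yields $\gamma_{kd}(n) \le g_{kd}(n) \le \beta_{kd}(n)$, which is what makes the tight bounds announced in the abstract mutually consistent; but that observation is not needed for the statement as written.
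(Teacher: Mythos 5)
Your proposal is correct and matches the paper, which derives the theorem directly from the two preceding lemmas (the paper simply states that the lemmas prove the result, and your argument spells out the standard minimality chain they imply). Nothing further is needed.
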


Note that the inequalities above can be strict, as we will show for \mbox{$k=2$}. Consider the triangulation $T$ depicted in Fig. \ref{FIG:article-arXiv-3to6}(a). We start by looking for a \mbox{$2d$-dominant} set of minimum cardinality.  The black vertices in Fig. \ref{FIG:article-arXiv-3to6}(b) form a \mbox{$2d$-dominating} set, since each vertex of $T$ is at a distance less than or equal to 2 from a black vertex. Besides, it is clear that the extreme vertices can not be $2d$-dominated by the same vertex, thus any $2d$-dominating set has to have at least two vertices, one to cover each extreme. Consequently, $\gamma_{2d}(T)=2$. But the pair of black vertices is not a $2d$-guarding set because the shaded area is not $2d$-guarded (all the vertices of the shaded triangles are at a distance 2 from the black vertices). Now, we look for a $2d$-guarding set of minimum cardinality. Note that, in Fig. \ref{FIG:article-arXiv-3to6}(d), the gray vertices are a $2d$-guarding set. Each shaded triangle needs one $2d$-guard since they are at distance of 3 and thus every $2d$-guarding set has cardinality at least 3. Therefore, \mbox{$g_{2d}(T)=3$}. Finally, we seek a \mbox{$2d$-vertex} cover. In Fig. \ref{FIG:article-arXiv-3to6}(e), each of the bold edges needs a different vertex to be $2d$-covered, since the distance between each pair of edges is greater than or equal to 3. In this way no single vertex can
simultaneously $2d$-cover two of the bold edges. Thus, \mbox{$\beta_{2d}(T) \geq 4$}. Note that, this example can easily be generalized to any value of $k$.

\begin{figure}[!htb]
  \centering
  \subfloat[]{\includegraphics[width=0.35\textwidth]{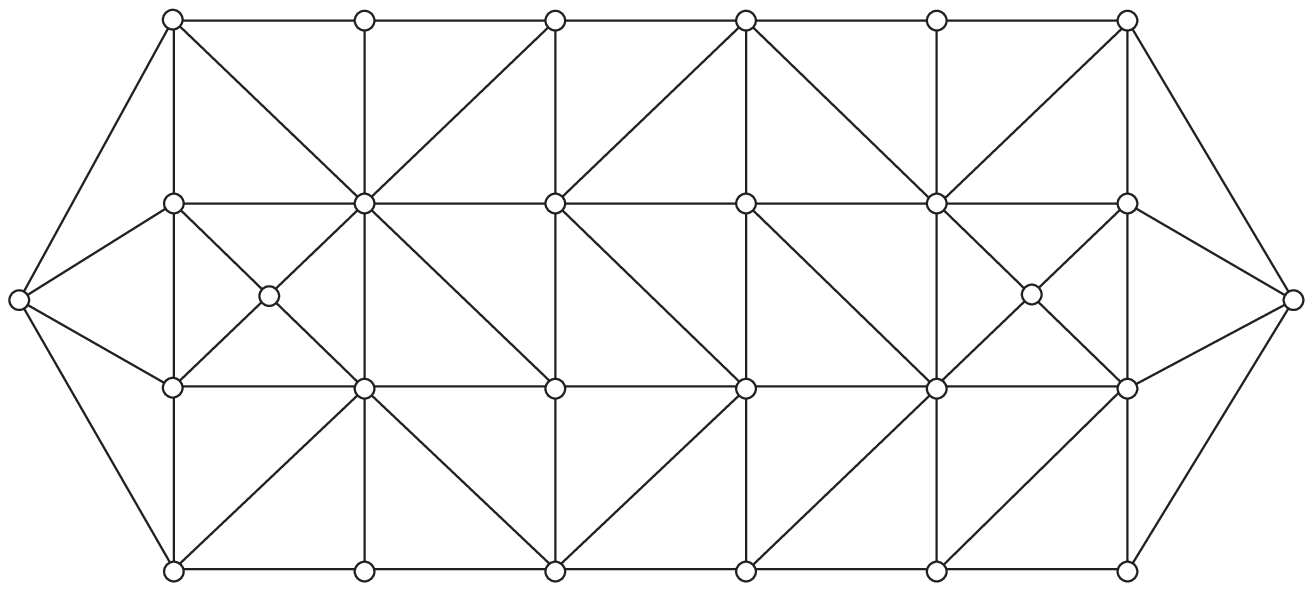}}
  \hspace{0.5cm}
  \subfloat[]{\includegraphics[width=0.35\textwidth]{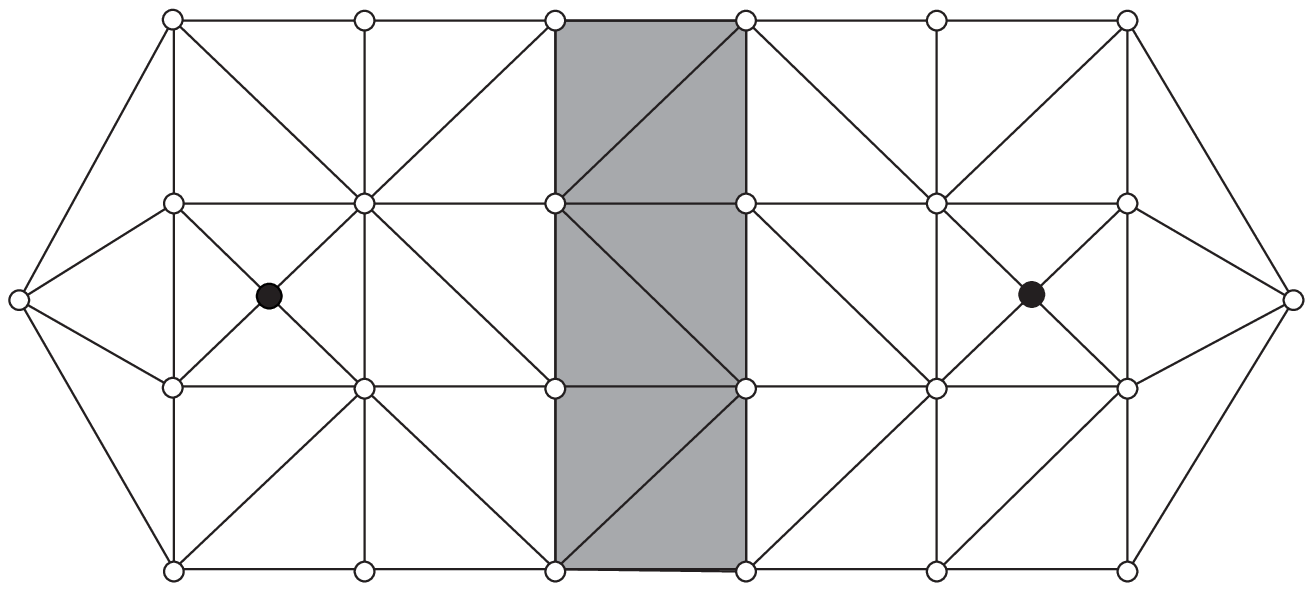}}
  \hspace{0.5cm}
  \subfloat[]{\includegraphics[width=0.35\textwidth]{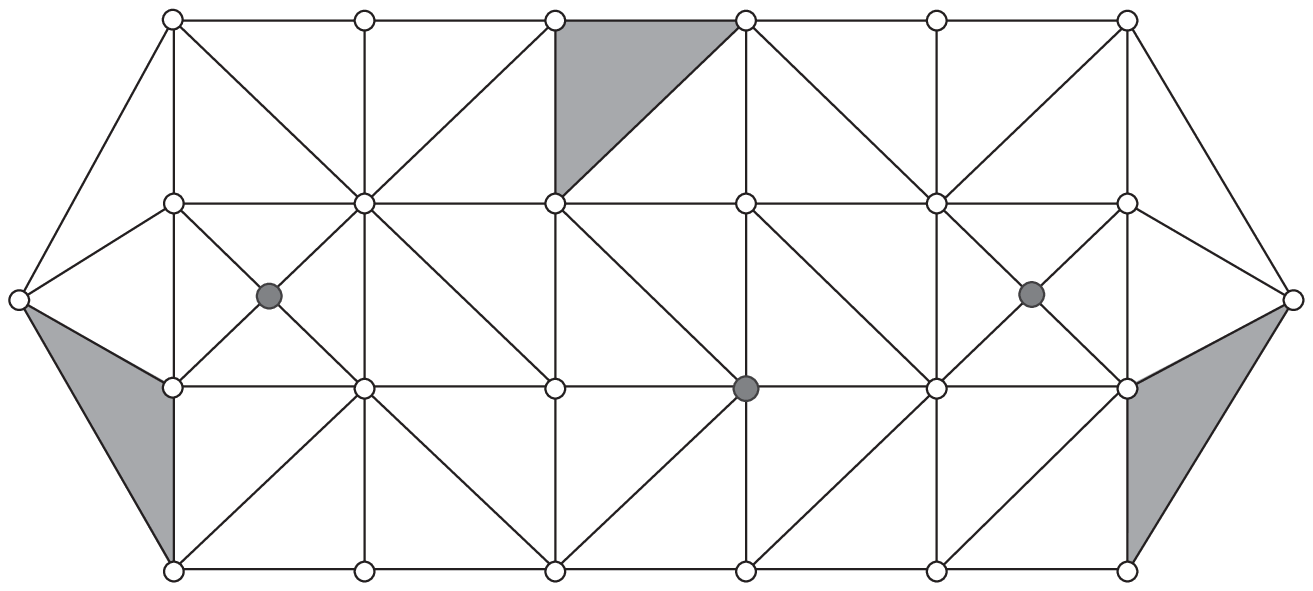}}
  \hspace{0.5cm}
  \subfloat[]{\includegraphics[width=0.35\textwidth]{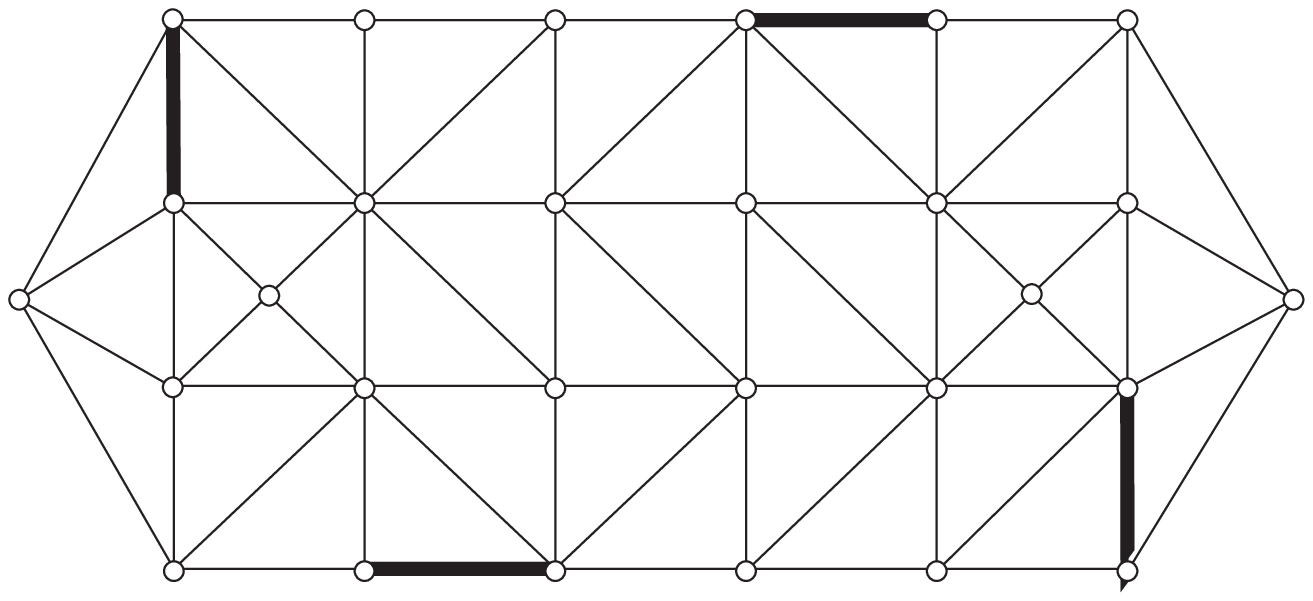}}
  \caption{(a) A triangulation $T$; (b) a $2d$-dominating set for a triangulation $T$ (black vertices); (c) a $2d$-guarding set for $T$ (gray vertices); (d) each of the bold edges needs a different vertex to be $2d$-covered.}
  \label{FIG:article-arXiv-3to6}
\end{figure}

\section{$2d$-guarding and $2d$-domination of maximal outerplanar graphs}
\label{SEC:Domination_and_DistanceTightUpperBounds}

In this section we establish tight bounds for $g_{2d}(n)$ and $\gamma_{2d}(n)$ on a special class of triangulation graphs -- the maximal outerplanar graphs -- which correspond, as stated above, to triangulations of polygons. We call the edges on the exterior face \emph{exterior edges}, otherwise they are \emph{interior edges}. In order to do this, and following the ideas of O'Rourke \cite{O'Rourke83}, we first need to introduce some lemmas.

\begin{lemma}
\label{Lem:f(m-2)}
Suppose that $f(m)$ $2d$-guards are always sufficient to guard any outerplanar maximal graph with $m$ vertices. If $G$ is an arbitrary outerplanar maximal graph with two $2d$-guards placed at any two adjacent of its $m$ vertices, then $f(m-2)$ additional $2d$-guards are sufficient to guard $G$.
\end{lemma}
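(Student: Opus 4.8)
The plan is to delete the two guard-carrying vertices $u$ and $v$ from $G$, re-triangulate, invoke the hypothesis on the resulting maximal outerplanar graph with $m-2$ vertices, and argue that the two pre-placed guards absorb exactly the ``damage'' done by the deletion. First note that, for $k=2$, a vertex $p$ is a $2d$-guard of every triangle containing a vertex at distance at most $k-1=1$ from $p$. Hence the guard at $u$ alone already $2d$-guards every triangle of $G$ having a vertex in the closed neighbourhood $N[u]$, and likewise for $v$, so $\{u,v\}$ $2d$-guards every triangle meeting $N[u]\cup N[v]$. The only triangles still in need of a guard are the \emph{critical} ones, whose three vertices all avoid $N[u]\cup N[v]$.

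Next I would build the smaller graph in two steps. Delete $u$ from $G$ and re-triangulate the polygonal hole it leaves (whose vertices are the former neighbours of $u$); this gives a maximal outerplanar graph on $m-1$ vertices in which every newly inserted edge joins two vertices of $N[u]\setminus\{u\}$. Then delete $v$ from that graph and re-triangulate again, obtaining a maximal outerplanar graph $H$ on $m-2$ vertices in which every further new edge joins two vertices of $(N[u]\cup N[v])\setminus\{u,v\}$. Two facts follow at once: (i) every critical triangle of $G$ is untouched, since all three of its vertices and edges survive and it remains a bounded face of $H$ (a re-triangulation never places a diagonal inside an already triangular face); and (ii) every edge of $H$ that is not an edge of $G$ has both endpoints in $(N[u]\cup N[v])\setminus\{u,v\}$.

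Now apply the hypothesis: $H$ has $m-2$ vertices, so it admits a $2d$-guarding set $F$ with $|F|=f(m-2)$, and I claim $F\cup\{u,v\}$ $2d$-guards $G$. A triangle of $G$ meeting $N[u]\cup N[v]$ is guarded by $u$ or $v$. A critical triangle $T$ is a bounded face of $H$ by (i), so some $x\in F$ $2d$-guards $T$ in $H$: either $x$ is a vertex of $T$, whence $x$ $2d$-guards $T$ in $G$ as well, or $x$ is adjacent in $H$ to a vertex $y$ of $T$; since $y$ avoids $N[u]\cup N[v]$, fact (ii) forces $xy$ to be an edge of $G$, so $dist_G(x,y)\leq 1$ and again $x$ $2d$-guards $T$ in $G$. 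Thus the two pre-placed guards together with the $f(m-2)$ guards of $F$ $2d$-guard $G$, which is what the lemma asserts.

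The main obstacle is precisely that removing $u$ and $v$ destroys the triangulation, so re-triangulation cannot be avoided; the argument works only because every new edge and every altered face is confined to $N[u]\cup N[v]$, exactly the region already covered by the two given guards, so that $2d$-visibility of the critical faces transfers from $H$ back to $G$ with no loss. A secondary, routine point is to check that each single-vertex deletion followed by re-triangulation really does yield a maximal outerplanar graph with one fewer vertex whose new edges all lie among the deleted vertex's neighbours.
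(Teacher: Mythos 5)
Your proof is correct, but it takes a genuinely different route from the paper. The paper follows O'Rourke's contraction argument: it contracts the edge $(a,b)$ joining the two guarded vertices together with an exterior edge $(b,c)$, obtaining a maximal outerplanar graph $G^*$ on $m-2$ vertices, guards $G^*$ with $f(m-2)$ guards, and transfers this guarding back to $G$, relocating a guard placed at the merged vertex $x$ to $c$ if necessary; the triangles near $a,b,c$ are absorbed by the two pre-placed guards. You instead delete the two guarded vertices and re-triangulate, and your key observation --- that every new edge of $H$ lies inside $(N[u]\cup N[v])\setminus\{u,v\}$, while every ``critical'' triangle (one missed by $u$ and $v$) survives as a face of $H$, so $2d$-visibility of critical faces transfers from $H$ to $G$ along edges that already exist in $G$ --- replaces the paper's transfer step. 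Your version has two advantages: it nowhere uses the adjacency of the two guarded vertices, so it proves a slightly stronger statement (and subsumes the analogous one-guard lemma, Lemma \ref{Lem:f(m-1)Guarding}, as a special case), and it makes explicit the distance bookkeeping that the contraction argument passes over briefly (contraction can shorten distances, so one must check that visibility is not lost upon expansion --- the paper handles this tersely via ``counterparts in $G^*$'' and the relocation to $c$). The price is the routine but necessary verification, which you correctly flag, that deleting a vertex of a maximal outerplanar graph and re-triangulating the resulting hole yields a maximal outerplanar graph on one fewer vertex whose new edges all join former neighbours of the deleted vertex; the paper's contraction stays closer to the O'Rourke-style machinery used in the rest of Section \ref{SEC:Domination_and_DistanceTightUpperBounds}.
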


\begin{proof}

Let $a$ and $b$ be the adjacent vertices at which the $2d$-guards are placed, and $c$ the vertex on the exterior face of $G$ adjacent to $b$. Contract the edges $(a,b)$ and $(b,c)$ of $G$ to produce the outerplanar maximal graph $G^*$ of \mbox{$m-2$} vertices, that is, remove the edges $(a,b)$ and $(b,c)$ and replace them with a new vertex $x$ adjacent to every vertex to which $a$,$b$ and $c$ were adjacent to (see Fig. \ref{FIG:article-arXiv-7}).

 \begin{figure}[!htb]
  \centering
  \includegraphics[scale=0.4]{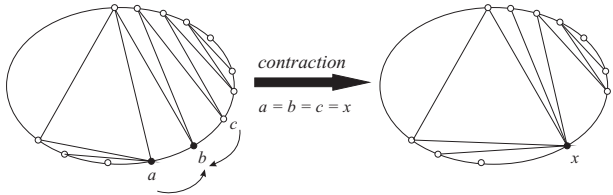}
  \caption{Contraction of the edges $(a,b)$ and $(b,c)$.}
  \label{FIG:article-arXiv-7}
\end{figure}

We know that $f(m-2)$ $2d$-guards are sufficient to guard $G^*$. Suppose that no $2d$-guard is placed at $x$. Then the same $2d$-guarding scheme will guard $G$, since the $2d$-guards placed at $a$ and $b$ guard the triangles with vertices at $a$, $b$ and $c$, and the remaining triangles are guarded by their counterparts counterparts in $G^*$. If a guard is placed at $x$, when the graph is expanded back into $G$, the guard placed at $x$ will be placed at $c$ to assure that $G$ is guarded.

\end{proof}

\begin{lemma}
\label{Lem:f(m-1)Guarding}
Suppose that $f(m)$ $2d$-guards are always sufficient to guard any outerplanar maximal graph with $m$ vertices. If $G$ is an arbitrary outerplanar maximal graph with one $2d$-guard placed at any one of its $m$ vertices, then $f(m-1)$ additional $2d$-guards are sufficient to guard $G$.
\end{lemma}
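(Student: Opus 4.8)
The plan is to mirror the proof of Lemma \ref{Lem:f(m-2)}, but using a single edge contraction instead of two. Let the one $2d$-guard be placed at a vertex $a$ of $G$, and let $b$ be a neighbour of $a$ on the exterior face of $G$ (such a $b$ exists because every vertex of a maximal outerplanar graph lies on the exterior face and has at least one exterior-edge neighbour). Contract the exterior edge $(a,b)$: delete $a$ and $b$ and introduce a new vertex $x$ adjacent to every vertex that was adjacent to $a$ or to $b$. The resulting graph $G^{*}$ is again a maximal outerplanar graph, now on $m-1$ vertices, so by hypothesis it admits a $2d$-guarding set of size $f(m-1)$.

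The key step is to lift a $2d$-guarding set $F^{*}$ of $G^{*}$ back to $G$. If $x \notin F^{*}$, I claim $F^{*} \cup \{a\}$ $2d$-guards $G$: the two triangles of $G$ incident to the edge $(a,b)$ are guarded by the guard at $a$ (their vertices are at distance $0$ or $1$ from $a$), and every other triangle of $G$ has a natural counterpart triangle in $G^{*}$ which is $2d$-visible from some $q \in F^{*}$; because contraction does not increase distances, the corresponding vertex in $G$ is at distance at most $k-1=1$ from the corresponding guard, so that triangle is $2d$-guarded in $G$ as well. If $x \in F^{*}$, replace the guard at $x$ by a guard at $b$ (equivalently, relocate it to whichever of $a,b$ is convenient), keep the pre-existing guard at $a$, and keep all other guards of $F^{*}$; then the triangles incident to $(a,b)$ are covered by $a$, and for any other triangle that was $2d$-visible from $x$ in $G^{*}$, it was so via a vertex at distance $\le 1$ from $x$, hence that vertex in $G$ is at distance $\le 1$ from $a$ or from $b$, so the triangle is still $2d$-guarded. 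In either case we have used the one pre-placed guard at $a$ together with $f(m-1)$ additional guards, which is what the statement asserts.

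The main obstacle — the point that needs to be checked carefully rather than waved through — is the distance bookkeeping under contraction: one must verify that for $k=2$ the relevant threshold is distance $k-1=1$, that contracting an exterior edge never pushes a pair of vertices that were within distance $1$ in $G$ to distance more than $1$ in $G^{*}$ after the identification (it does not, since identifying two adjacent vertices only shortens paths), and, in the reverse direction, that a witness vertex at distance $\le 1$ from $x$ in $G^{*}$ corresponds in $G$ to a vertex at distance $\le 1$ from $\{a,b\}$. There is also the mild subtlety, already present in Lemma \ref{Lem:f(m-2)}, that after expansion a guard sitting at the contracted vertex $x$ must be reassigned to a specific one of the original vertices; here, since we additionally retain the guard pre-placed at $a$, placing the reassigned guard at $b$ (or at $a$, harmlessly) suffices. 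Once these distance inequalities are confirmed, the counting is immediate and the lemma follows.
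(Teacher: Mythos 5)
Your proof is correct and follows essentially the same route as the paper's: contract the exterior edge $(a,b)$, apply the hypothesis to the resulting $(m-1)$-vertex graph $G^{*}$, and when a guard of $G^{*}$ lands on the contracted vertex $x$, relocate it to $b$ upon expansion while keeping the pre-placed guard at $a$. Two cosmetic remarks only: an exterior edge of a maximal outerplanar graph bounds just one bounded triangle (not two), and the fact actually needed for the lifting step is that adjacency between vertices other than $x$ is preserved when expanding $G^{*}$ back into $G$ (your appeal to ``contraction does not increase distances'' points in the opposite direction), but neither issue affects the validity of the argument.
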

\begin{proof}

Let $a$ be the vertex where a $2d$-guard is placed and $b$ a vertex on the exterior face of $G$ adjacent to $a$. Contract the edge $(a,b)$ to produce the outerplanar maximal graph $G^*$ of $m-1$ vertices (that is, remove edge $(a,b)$ and replace it with a new vertex $x$ adjacent to every vertex to which $a$ and $b$ were adjacent to). We know that $f(m-1)$ $2d$-guards are sufficient to guard $G^*$. Suppose that no $2d$-guard is placed at $x$. Then the same $2d$-guarding scheme will guard $G$, since the $2d$-guard placed at $a$ covers the triangles with vertices at $a$ and $b$, and the remaining triangles have guarding counterparts in $G^*$. If a guard is placed at $x$, then such guard will be placed at $b$ when the graph is expanded back into $G$. The remaining guards together with $b$ assure that $G$ is $2d$-guarded.

\end{proof}

The next lemma can be easily proven by following the ideas of O'Rourke \cite{O'Rourke83}.

\begin{lemma}
\label{Lem:O'Rourke}
Let $G$ be an outerplanar maximal graph with $n \geq 2k$ vertices. There is an interior edge $e$ in $G$ that partitions $G$ into two components, one of which contains $m=k,k+1,\ldots, 2k-3$ or $2k-2$ exterior edges of $G$.
\end{lemma}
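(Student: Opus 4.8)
The plan is to adapt O'Rourke's classical descent argument. Recall first that a maximal outerplanar graph on $n$ vertices is a triangulation of an $n$-gon, that any interior edge (diagonal) $e$ of $G$ splits the polygon into two sub-polygons, each of which is again a maximal outerplanar graph having $e$ as one of its boundary edges, and that a sub-polygon with $v$ vertices has exactly $v-1$ of its $v$ boundary edges equal to exterior edges of $G$ (the last one being $e$). Consequently, the two sides of any interior edge $e$ contain $m_1$ and $m_2$ exterior edges of $G$ with $m_1+m_2=n$, and since $n\ge 2k$, at least one side contains at least $k$ exterior edges of $G$.

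The first step is to isolate the right extremal object. Consider all pairs consisting of an interior edge $e$ of $G$ and a side $S$ of $e$ such that $S$ contains at least $k$ exterior edges of $G$; by the remark above this family is nonempty. Among all such pairs, pick one, say the side $P$ of the interior edge $e$, for which the number $m$ of exterior edges of $G$ in $P$ is as small as possible. Then $m\ge k$ by construction, so it remains only to show $m\le 2k-2$.

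The core step is to rule out $m\ge 2k-1$. Assume $m\ge 2k-1$ and let $\Delta$ be the unique triangle of the induced triangulation of $P$ incident to $e$; write $f_1,f_2$ for the two edges of $\Delta$ other than $e$. Since $m\ge 3$, $P$ is not a single triangle, so at least one $f_i$ is an interior edge of $G$; when $f_i$ is interior let $Q_i$ be the side of $f_i$ not containing $\Delta$. A routine count (distributing the $m$ exterior edges of $P$ between $Q_1$, $Q_2$ and the edges $f_1,f_2$ themselves) gives $q_1+q_2=m$, where $q_i$ is the number of exterior edges of $G$ in $Q_i$ if $f_i$ is interior and $q_i:=1$ if $f_i$ is an exterior edge of $G$. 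I would then argue that one can always select an interior $f_i$ whose associated $Q_i$ has $k\le q_i\le m-1$: if both $f_1,f_2$ are interior then $q_1,q_2\ge 2$ and $q_1+q_2=m\ge 2k-1$, so the larger of them lies between $\lceil m/2\rceil\ge k$ and $m-2$; if only one of them is interior then its $q_i=m-1\ge 2k-2\ge k$. Either way $Q_i$ is a side of an interior edge of $G$ containing at least $k$, but fewer than $m$, exterior edges of $G$, contradicting the minimality of $m$. Hence $m\le 2k-2$, and $P$ is a component of the required kind.

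I expect the only delicate point to be this small case analysis: one must check that in each configuration of $\Delta$ (one versus two interior edges) the retained sub-polygon $Q_i$ still carries at least $k$ exterior edges of $G$, and it is precisely there that the hypothesis $n\ge 2k$, equivalently $m\ge 2k-1$ under the contradiction, is used; everything else is bookkeeping. An essentially equivalent route, perhaps cleaner to write, is to run the argument on the dual tree of the triangulation: this is a tree on $n-2$ nodes of maximum degree $3$, and one seeks an edge whose removal detaches a subtree with between $k-1$ and $2k-3$ nodes, found by descending from a leaf to the first node all of whose (at most two) children have subtrees of size at most $k-2$; the bound $3$ on the dual tree's degree plays exactly the role of ``a triangle has three edges'' above.
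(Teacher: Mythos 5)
Your proof is correct: the extremal choice of a side with at least $k$ exterior edges of minimum size $m$, followed by the analysis of the triangle supported by the cutting diagonal to force $m\le 2k-2$, is exactly the classical Chv\'atal/O'Rourke argument, and the bookkeeping (the count $q_1+q_2=m$, the bound $q_i\ge 2$ for interior $f_i$, and the two cases) checks out. The paper itself omits the proof and merely cites O'Rourke, so your reconstruction is essentially the intended argument rather than a different route.
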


\begin{theorem}
\label{Thm:SufficiencyGuarding}
Every $n$-vertex maximal outerplanar graph, with \mbox{$n \geq 5$}, can be $2d$-guarded by $\lfloor \frac{n}{5} \rfloor$ $2d$-guards. That is, $g_{2d}(n) \leq \lfloor \frac{n}{5} \rfloor$ for all \mbox{$n \geq 5$}.
\end{theorem}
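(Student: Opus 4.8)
The plan is to follow O'Rourke's inductive scheme, using Lemmas~\ref{Lem:f(m-2)}, \ref{Lem:f(m-1)Guarding} and \ref{Lem:O'Rourke}, and to induct on $n$ with $f(n):=\lfloor n/5\rfloor$. For the \textbf{base cases $5\le n\le 9$} one has $f(n)=1$, so the goal is to show that every maximal outerplanar graph $H$ on at most $9$ vertices is $2d$-guarded by a single vertex $p$; equivalently $V(H)\setminus N[p]$ contains no triangle of $H$. I would obtain $p$ from the \emph{centroid} of the dual tree of $H$: choose a triangle $T_c=\{a,b,c\}$ whose deletion from the dual tree leaves components of at most $\lfloor(n-2)/2\rfloor\le 3$ triangles. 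Then $T_c$ has at least two interior edges and cuts $H$ into two or three sub-polygons, each with at most $5$ vertices, glued along edges of $T_c$. The elementary fact I would lean on is that in a polygon on at most $5$ vertices every triangle lies within distance $1$ of every vertex; hence a vertex $v\in\{a,b,c\}$ can fail to $2d$-guard $H$ only if there is a sub-polygon \emph{not} incident to $v$, and then that sub-polygon has exactly $5$ vertices. If $T_c$ has two interior edges, their common endpoint is incident to both sub-polygons and works. If $T_c$ has three interior edges, the three sub-polygons have at most $n+3\le 12<15$ vertices in total (counting $a,b,c$ with multiplicity), so they are not all pentagons, and one of $a,b,c$ $2d$-guards $H$.

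For the \textbf{inductive step $n\ge 10$}, assume the statement for all smaller maximal outerplanar graphs. By Lemma~\ref{Lem:O'Rourke} with $k=5$ there is an interior edge $e=(u,v)$ splitting $G$ into maximal outerplanar graphs $G_1$ and $G_2$ sharing only $e$, where $G_1$ carries $m\in\{5,6,7,8\}$ of the exterior edges of $G$, hence $m+1\le 9$ vertices, and $G_2$ has $n-m+1$ vertices. Since the triangles of $G$ are exactly those of $G_1$ together with those of $G_2$, and distances in a subgraph are at least those in $G$, it suffices to $2d$-guard $G_1$ and $G_2$ separately. If $m\ge 6$, the base case gives one guard for $G_1$, while $G_2$ is handled by the induction hypothesis with $f(n-m+1)\le f(n-5)=f(n)-1$ guards when $n-m+1\ge 5$, or by one guard when $n-m+1\in\{3,4\}$ (which forces $n\in\{10,11\}$ and $f(n)=2$); in either case the total is at most $f(n)$.

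\textbf{The delicate case, and the main obstacle, is $m=5$.} Here $G_2$ has $n-4$ vertices, and since $f(n-4)$ may exceed $f(n)-1$, a guard covering $G_1$ must be placed at $u$ or $v$ so that it can be reused for $G_2$. Now $G_1$ is a triangulated hexagon in which $u$ and $v$ are consecutive on the boundary. I would first argue that one of $u,v$, say $u$, has degree at least $3$ in $G_1$: otherwise both $u$ and $v$ are ears of $G_1$, and the two diagonals cutting off these ears would cross, which is impossible in a triangulation. Then $|N_{G_1}[u]|\ge 4$, so $V(G_1)\setminus N_{G_1}[u]$ has at most two vertices and contains no triangle, i.e.\ $u$ $2d$-guards $G_1$. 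Placing one guard at $u$ and applying Lemma~\ref{Lem:f(m-1)Guarding} to $G_2$ (which has $n-4\ge 6$ vertices with a guard already at its boundary vertex $u$) yields $f((n-4)-1)=f(n-5)$ further guards, for a total of $1+f(n-5)=\lfloor n/5\rfloor=f(n)$, closing the induction. Lemma~\ref{Lem:f(m-2)} would enter only in an alternative bookkeeping in which, for some value of $m$, one prefers to seed $G_2$ with guards at the two adjacent vertices $u$ and $v$. Beyond the $m=5$ case, the part I expect to require the most care is the base-case sub-lemma, i.e.\ making the centroid argument and the ``distance $1$ in small polygons'' fact fully precise.
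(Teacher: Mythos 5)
Your proof is correct, and although it runs on the same engine as the paper's (induction via Lemma~\ref{Lem:O'Rourke} plus the contraction lemmas), you parameterize the decomposition differently, which changes where the work lands. You invoke Lemma~\ref{Lem:O'Rourke} with $k=5$, so the split gives $m\in\{5,\dots,8\}$ and the induction starts at $n=10$; the paper uses $k=6$, so $m\in\{6,\dots,10\}$, the induction starts at $n=12$, and the delicate cases are $m=9$ and $m=10$, which require a finer analysis of which diagonals of $G_1$ are present and use both Lemma~\ref{Lem:f(m-2)} and Lemma~\ref{Lem:f(m-1)Guarding}. In your version the only delicate case is $m=5$, and you dispose of it cleanly: in the triangulated hexagon $G_1$ the unique triangle on the edge $e=(u,v)$ forces one of $u,v$ to have degree at least $3$, such a vertex $2d$-guards any $6$-vertex maximal outerplanar graph by the ``complement of $N[u]$ has at most two vertices'' count, and seeding that guard into $G_2$ via Lemma~\ref{Lem:f(m-1)Guarding} gives $1+f(n-5)=\lfloor n/5\rfloor$; Lemma~\ref{Lem:f(m-2)} is never needed. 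The price is that your base cases must assert one-guardability for all maximal outerplanar graphs with $5\le n\le 9$ vertices, and here you actually do more than the paper, which records these values (and $n=10,11$) in a table as ``easily established'': your dual-tree centroid argument (centroid triangle has dual degree $\ge 2$, each branch has at most $3$ triangles hence at most $5$ vertices, triangles in a $\le 5$-vertex piece are within distance $1$ of any of its vertices, and the vertex-count bound $\sum|P_i|=n+3\le 12$ excludes three pentagon branches) checks out and makes the base case self-contained. The arithmetic in the remaining cases ($m\ge 6$, including the degenerate $|V(G_2)|\in\{3,4\}$ possibilities at $n\in\{10,11\}$) is also handled correctly, so overall you get a somewhat leaner case analysis than the paper at the cost of a slightly heavier, but fully proved, base case.
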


\begin{proof}

For $5 \leq n \leq 11$, the truth of the theorem can be easily established -- the upper bounds are resumed in Table \ref{TAB:guarding}. It should be noted that for \mbox{$n=5$} the $2d$-guard can be placed randomly and for \mbox{$n=6$} it can be placed at any vertex of degree greater than 2 (or one that belongs to an interior edge).

\begin{table}[!htb]
\centering
\small
  \begin{tabular}{ | l | c | c | c | c | c | c | c |  }
    \hline
    $n$ & 5  & 6  & 7  & 8  & 9  & 10  & 11  \\
    \hline \hline
    $g_{2d}(n)$ & 1  & 1  & 1  & 1  & 1  & 2  & 2  \\
    \hline
  \end{tabular}
  \caption{Number of $2d$-guards that suffice to cover a maximal outerplanar graph of $n$ vertices.}
  \label{TAB:guarding}
\end{table}

Assume that \mbox{$n \geq 12$} and that the theorem holds for all \mbox{$n' < n$}. Let $G$ be a triangulation graph with $n$ vertices. The vertices of $G$ are labeled with $0,1,2, \ldots , n$. Lemma \ref{Lem:O'Rourke} guarantees  the existence of an interior edge $e$ (which can be labeled $(0,m)$) that divides $G$ into maximal outerplanar graphs $G_1$ and $G_2$, such that $G_1$ has $m$ exterior edges of $G$ with $6 \leq m \leq 10$. Each value of $m$ will be considered separately.

\begin{enumerate}

\item[(1)] $m=6$. $G_1$ has \mbox{$m+1=7$} exterior edges, thus $G_1$ can be $2d$-guarded with one guard. $G_2$ has $n-5$ exterior edges including $e$, and by induction hypothesis, it can be $2d$-guarded with $\lfloor \frac{n-5}{5} \rfloor = \lfloor \frac{n}{5} \rfloor -1$ guards. Thus $G_1$ and $G_2$ together can be $2d$-guarded by $\lfloor \frac{n}{5} \rfloor$ guards.

\item[(2)] $m=7$. $G_1$ has $m+1=8$ exterior edges, thus $G_1$ can be $2d$-guarded with one guard. $G_2$ has $n-6$ exterior edges including $e$, and by induction hypothesis, it can be $2d$-guarded with $\lfloor \frac{n-6}{5} \rfloor \leq \lfloor \frac{n}{5} \rfloor -1$ guards. Thus $G_1$ and $G_2$ together can be $2d$-guarded by $\lfloor \frac{n}{5} \rfloor$ guards.

\item[(3)] $m=8$. $G_1$ has $m+1=9$ exterior edges, thus $G_1$ can be $2d$-guarded with one guard. $G_2$ has $n-7$ exterior edges including $e$, and by induction hypothesis, it can be $2d$-guarded with $\lfloor \frac{n-7}{5} \rfloor \leq \lfloor \frac{n}{5} \rfloor -1$ guards. Thus $G_1$ and $G_2$ together can be $2d$-guarded by $\lfloor \frac{n}{5} \rfloor$ guards.

\item[(4)] $m=9$.  The presence of any of the internal edges (0,8), (0,7), (0,6), (9,1), (9,2) and (9,3) would violate the minimality of $m$. Thus, the triangle $T$ in $G_1$ that is bounded by $e$ is either (0,5,9) or (0,9,4). Since these are equivalent cases, suppose that $T$ is (0,5,9), see Fig. \ref{FIG:article-arXix-8to10}(a). The pentagon (5,6,7,8,9) can be $2d$-guarded by placing one guard randomly. However, to $2d$-guard the hexagon (0,1,2,3,4,5) we cannot place a $2d$-guard at any vertex. We will consider two separate cases.

\begin{figure}[!htb]
  \centering
  \subfloat[]{\includegraphics[width=0.2\textwidth]{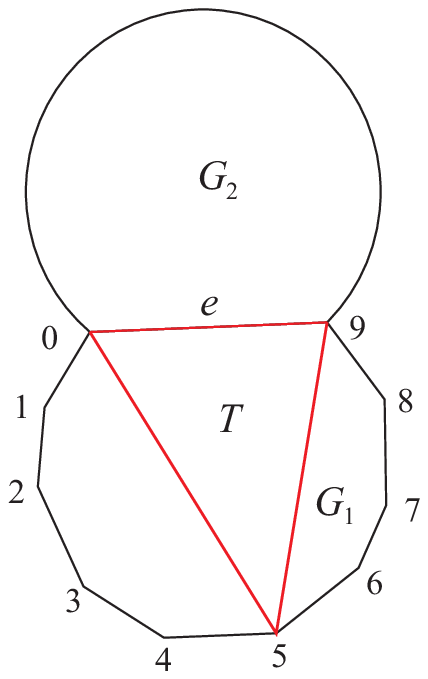}}
  \hspace{0.8cm}
  \subfloat[]{\includegraphics[width=0.2\textwidth]{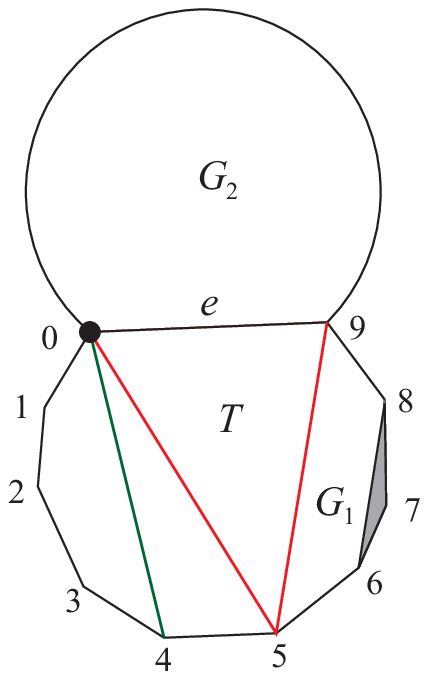}}
  \hspace{0.8cm}
  \subfloat[]{\includegraphics[width=0.2\textwidth]{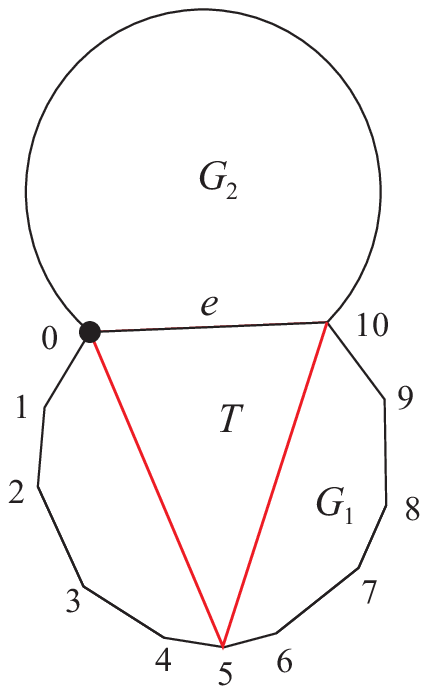}}
  \caption{The interior edge $e$ separates $G$ into two maximal outerplanar graphs $G_1$ and $G_2$: (a) the triangle $T$ in $G_1$ that is bounded by $e$ is (0,5,9); (b) $G_1$ has 10 exterior edges, both the internal edge (0,4) and the triangle (6,7,8) are present; (c) $G_1$ has 11 exterior edges and the triangles (2,3,4) and (6,7,8) are present.}
  \label{FIG:article-arXix-8to10}
\end{figure}

\begin{enumerate}

\item
The internal edge (0,4) is not present. If a guard is placed at vertex 5, then the hexagon (0,1,2,3,4,5) is $2d$-guarded, thus $G_1$ is $2d$-guarded. Since $G_2$ has $n-8$ edges  it can be $2d$-guarded by \mbox{$\lfloor \frac{n-8}{5} \rfloor \leq \lfloor \frac{n}{5} \rfloor -1$} guards by induction hypothesis. This yields a $2d$-guarding of $G$ by $\lfloor \frac{n}{5} \rfloor$ guards.

\item
The internal edge (0,4) is  present. If a $2d$-guard is placed at vertex 0, then $G_1$ is $2d$-guarded unless the triangle (6,7,8) is present in the triangulation (see Fig. \ref{FIG:article-arXix-8to10}(b)). In any case, two $2d$-guards placed at vertices 0 and 9 guard $G_1$. $G_2$ has $n-8$ exterior edges, including $e$. By lemma \ref{Lem:f(m-2)} the two guards placed at vertices 0 and 9 allow the remainder of $G_2$ to be guarded by $f(n-8-2)=f(n-10)$ additional $2d$-guards. Recall that $f(n')$ is the number of $2d$-guards that are always sufficient to guard a maximal outerplanar graph with $n'$ vertices. By the induction hypothesis $f(n')=\lfloor \frac{n'}{5} \rfloor$. Thus, \mbox{$\lfloor \frac{n-10}{5} \rfloor = \lfloor \frac{n}{5} \rfloor - 2$} guards suffice to guard $G_2$. Together with the guards placed at vertices 0 and 9 that $2d$-guard $G_1$, all of $G$ is guarded by $\lfloor \frac{n}{5}\rfloor$ $2d$-guards.

\end{enumerate}

\item[(5)] $m=10$. The presence of any of the internal edges (0,9), (0,8), (0,7), (0,6), (9,1), (9,2), (9,3) and (9,4) would violate the minimality of $m$. Thus, the triangle $T$ in $G_1$ that is bounded by $e$ is (0,5,10) (see Fig. \ref{FIG:article-arXix-8to10}(c)). We will consider two separate cases:

    \begin{enumerate}

        \item
        The vertices 0 and 10  have degree 2 in hexagons (0,1,2,3,4,5) and (5,6,7,8,9,1,0), respectively. Then one $2d$-guard placed at vertex 5 guards $G_1$. By the induction hypothesis $G_2$ can be guarded with \mbox{\mbox{$\lfloor \frac{n-9}{5} \rfloor \leq \lfloor \frac{n}{5}\rfloor-1$}} guards. Thus $G$ can be $2d$-guarded by $\lfloor \frac{n}{5}\rfloor$ guards.

        \item
        The vertex 0 has degree greater than 2 in hexagon (0,1,2,3,4,5). In this case we place a guard at vertex 0 and another guard in one vertex of the hexagon (5,6,7,8,9,1,0) of degree greater than 2. These two guards dominates $G_1$. $G_2$ has $n-9$ vertices. By lemma \ref{Lem:f(m-1)Guarding} the guard placed at vertex 0 permits the remainder of $G_2$ to be $2d$-guarded by \mbox{$f(n-9-1)=f(n-10)$} additional guards, where $f(n')$ is the number of $2d$-guards that are always sufficient to guard a maximal outerplanar graph with $n'$ vertices. By induction hypothesis $f(n')=\lfloor \frac{n'}{5} \rfloor$. Thus, \mbox{$\lfloor \frac{n-10}{5} \rfloor = \lfloor \frac{n}{5}\rfloor-2$} guards suffices to guard $G_2$. Together with the two already allocated to $G_1$, all of $G$ is guarded by $\lfloor \frac{n}{5}\rfloor$ guards.

    \end{enumerate}

\end{enumerate}

\end{proof}

To prove that this upper bound is tight we need to construct a maximal outerplanar graph $G$ of order $n$ such that \mbox{$g_{2d}(G) \geq \lfloor \frac{n}{5} \rfloor$}. Fig. \ref{FIG:article-arXiv-11} shows a maximal outerplanar graph $G$ for which $\gamma_{2d}(G)=\frac{n}{5}$, since the the black vertices dominate the graph $G$ and can only be \mbox{$2d$-dominated} by different vertices.

\vspace{0.5cm}

\begin{figure}[!htb]
\centering
\includegraphics[width=0.5\textwidth]{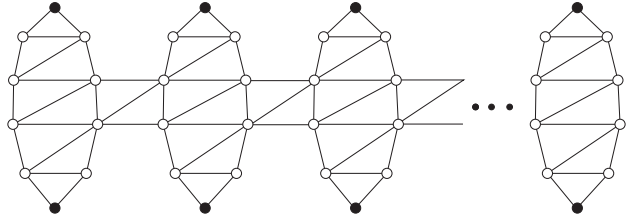}
\caption{A maximal outerplanar graph $G$ for which $\gamma_{2d}(G)=\frac{n}{5}$.} 
\label{FIG:article-arXiv-11}
\end{figure}

This example can be generalized to \mbox{$kd$-domination} to obtain \mbox{$\gamma_{kd}(n) \geq \frac{n}{(2k+1)}$}. For example, in Fig. \ref{FIG:article-arXiv-12}, the black vertices can only be \mbox{3-dominated} by different vertices, so \mbox{$\gamma_{3d}(n) \geq \frac{n}{7}$}.

\begin{figure}[!htb]
\centering
\includegraphics[width=0.5\textwidth]{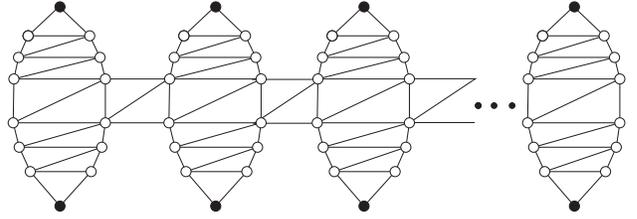}
\caption{A maximal outerplanar graph $G$ for which $\gamma_{3d}(G)=\frac{n}{7}$.} \label{FIG:article-arXiv-12}
\end{figure}

According to theorem \ref{Thm:inequalities}, \mbox{$\gamma_{2d}(G) \leq g_{2d}(G)$}, so \mbox{$\lfloor \frac{n}{5}\rfloor \leq g_{2d}(G)$}. In conclusion, $\lfloor \frac{n}{5}\rfloor$ $2d$-guards are occasionally necessary and always sufficient to guard a $n$-vertex maximal outerplanar graph $G$. On the other hand, we can also establish that \mbox{$\gamma_{2d} = \lfloor \frac{n}{5}\rfloor$}, since \mbox{$\lfloor \frac{n}{5}\rfloor \leq \gamma_{2d}(n)$} and \mbox{$\gamma_{2d}(n) \leq g_{2d}(n)$}, for all $n$. Thus, it follows:

\begin{theorem}
Every $n$-vertex maximal outerplanar graph with \mbox{$n \geq 5$} can be $2d$-guarded (and $2d$-dominated) by $\lfloor \frac{n}{5}\rfloor$ guards. This bound is tight in the worst case.
\end{theorem}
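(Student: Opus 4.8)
The plan is to assemble the statement from the pieces already established rather than to run a fresh induction. The sufficiency half --- that $\lfloor \frac{n}{5} \rfloor$ $2d$-guards always suffice --- is precisely Theorem~\ref{Thm:SufficiencyGuarding}, valid for every $n \geq 5$; and since every $kd$-guarding set is a $kd$-dominating set (the second lemma), the very same $\lfloor \frac{n}{5} \rfloor$ vertices also $2d$-dominate $G$. So the only real work left is the lower bound: exhibiting, for infinitely many $n$, a maximal outerplanar graph $G$ on $n$ vertices with $g_{2d}(G) \geq \lfloor \frac{n}{5} \rfloor$.

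For this I would use the ``necklace'' construction indicated in Fig.~\ref{FIG:article-arXiv-11}: take $n = 5t$ and glue $t$ copies of a five-vertex gadget along shared interior edges so that the result is maximal outerplanar, arranged so that each gadget carries a distinguished (black) vertex, with the distinguished vertices of distinct gadgets at graph-distance at least $3$. First I would pin down the gadget and the gluing rule precisely, then verify two things: (i) the $t$ black vertices form a $2d$-dominating set, i.e.\ every vertex of $G$ is within distance $2$ of the black vertex of its own gadget (or of a neighbouring one); and (ii) no vertex of $G$ lies within distance $2$ of two different black vertices, so every $2d$-dominating set must spend at least one vertex per gadget. Then $\gamma_{2d}(G) \geq t = \frac{n}{5} = \lfloor \frac{n}{5} \rfloor$, and Theorem~\ref{Thm:inequalities} gives $g_{2d}(G) \geq \gamma_{2d}(G) \geq \lfloor \frac{n}{5} \rfloor$; combined with Theorem~\ref{Thm:SufficiencyGuarding} this yields $g_{2d}(G) = \gamma_{2d}(G) = \lfloor \frac{n}{5} \rfloor$ for the whole family, which is the claimed tightness.

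To match the statement as phrased (all $n \geq 5$, with ``tight in the worst case''), I would note that the upper bound $\lfloor \frac{n}{5} \rfloor$ already holds for every $n \geq 5$, while the extremal family realises it whenever $5 \mid n$; for the remaining residues one can pad a single gadget of the necklace with one, two, three, or four extra vertices in its interior --- an operation that keeps $G$ maximal outerplanar and does not decrease $\gamma_{2d}$ --- so that the maximum over all $n$-vertex maximal outerplanar graphs still meets $\lfloor \frac{n}{5} \rfloor$.

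The step I expect to be the main obstacle is verification point (ii): confirming not merely that the black vertices are pairwise at distance $\geq 3$, but that \emph{no} vertex of $G$ can simultaneously $2d$-dominate the ``private'' region of two adjacent gadgets. This is a finite neighbourhood check, but it is delicate: the gluing edges must be short enough to keep $G$ maximal outerplanar, yet the gadgets must be ``deep'' enough that their interiors stay mutually far in the glued graph. The distance bookkeeping is routine once the gadget is fixed, so I would carry it out explicitly for two consecutive gadgets and then invoke the translational symmetry of the necklace to conclude for all pairs.
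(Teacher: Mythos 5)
Your proposal follows essentially the same route as the paper: the upper bound is quoted from Theorem~\ref{Thm:SufficiencyGuarding}, and tightness comes from the family of Fig.~\ref{FIG:article-arXiv-11}, whose black vertices have pairwise disjoint distance-2 neighbourhoods, so $\gamma_{2d}(G)\geq n/5$ and Theorem~\ref{Thm:inequalities} lifts this lower bound to $g_{2d}(G)$. The only caveats are minor: your stated spacing ``distance at least $3$'' must really be distance at least $5$ so that the radius-2 balls of the black vertices are disjoint (which is exactly what your verification point (ii) demands), and your padding argument for $5\nmid n$ supplies a detail the paper leaves implicit.
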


\section{$2d$-covering of maximal outerplanar graphs}
\label{SEC:CoveringTightUpperBounds}

In this section we determine an upper bound for $2d$-vertex cover on maximal outerplanar graphs and we show that this bound is tight. In order to do this, we first introduce the following lemma, whose proof is omitted, since it is analogous to the one of lemma \ref{Lem:f(m-1)Guarding}.

\begin{lemma}
\label{Lem:f(m-1)Covering}

Suppose that $f(m)$ vertices are always sufficient to $2d$-cover any outerplanar maximal graph with $m$ vertices. If $G$ is an arbitrary outerplanar maximal graph and if we choose any of its $m$ vertices to place a $2d$-covering vertex, then $f(m-1)$ additional vertices are sufficient $2d$-cover $G$.
\end{lemma}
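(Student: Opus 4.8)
The plan is to imitate, essentially verbatim, the proof of Lemma~\ref{Lem:f(m-1)Guarding}. Let $a$ be the vertex of $G$ at which we are required to place a $2d$-covering vertex, and let $b$ be one of the two exterior neighbours of $a$. Contract the exterior edge $(a,b)$: delete $a$ and $b$ and add a new vertex $x$ joined to every vertex that was adjacent to $a$ or to $b$; the result $G^{*}$ is again a maximal outerplanar graph, now with $m-1$ vertices. By hypothesis there is a set $C^{*}$ of $f(m-1)$ vertices that $2d$-covers $G^{*}$. I would then take as covering set of $G$ either $C=\{a\}\cup C^{*}$, when $x\notin C^{*}$, or $C=\{a\}\cup\bigl((C^{*}\setminus\{x\})\cup\{b\}\bigr)$, when $x\in C^{*}$. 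In both cases $C$ consists of the prescribed vertex $a$ together with at most $f(m-1)$ further vertices, so it only remains to verify that $C$ is a $2d$-vertex cover of $G$.

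To check this I would split the edges of $G$ into three kinds. An edge incident to $a$ is $2d$-covered by itself, since it contains $a\in C$. An edge $(b,w)$ incident to $b$ but not to $a$ is $2d$-covered by the length-$2$ path $a,b,w$, which contains $a\in C$; this uses that $a$ and $b$ are adjacent in $G$, being the contracted edge. Finally, an edge $e=(u,w)$ incident to neither $a$ nor $b$ is also an edge of $G^{*}$, hence lies on a covering path $P^{*}$ of $G^{*}$ of length at most $2$ passing through a vertex of $C^{*}$. If $P^{*}$ avoids $x$ it is already such a path in $G$, and its $C^{*}$-vertex, being different from $x$, lies in $C$. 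If $P^{*}$ uses $x$, then $x$ is an endpoint of $P^{*}$ adjacent in $G^{*}$ to one of $u,w$, say to $u$, so $u$ is a neighbour of $a$ or of $b$ in $G$; replacing $x$ in $P^{*}$ by such a neighbour produces a length-$2$ path of $G$ through $e$, and a short check of the two sub-cases (the neighbour is $a$, or the neighbour is $b$ and the $C^{*}$-witness of $P^{*}$ may be $x$) shows this path always meets $C$ --- which is exactly why $b$ is adjoined to $C$ precisely when $x\in C^{*}$.

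The only point that needs any care is this last verification, together with the bookkeeping rule for where the lifted copy of $x$ is placed. I do not expect a genuine obstacle here: in contrast with the guarding argument, where one must reason about the triangles incident to the contracted edge, the covering condition is made easy by the edge $(a,b)$ itself, which supplies a free length-$2$ witness for every edge at $b$, and this is what keeps the check short. With this in hand the induction bookkeeping is word for word that of Lemma~\ref{Lem:f(m-1)Guarding}, and the lemma follows.
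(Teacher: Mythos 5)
Your proof is correct, and it is exactly the argument the paper intends: the paper omits this proof, stating only that it is analogous to Lemma~\ref{Lem:f(m-1)Guarding}, and your contraction of the exterior edge $(a,b)$, with the lifted vertex $x$ reassigned to $b$, is precisely that analogue carried out in detail. Your case check of the three edge types (at $a$, at $b$, and edges of $G^{*}$ whose witness path may pass through $x$) is sound, so nothing further is needed.
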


\begin{theorem}
\label{Thm:SufficiencyCovering}
Every $n$-vertex maximal outerplanar graph, with \mbox{$n \geq 4$}, can be $2d$-covered with $\lfloor \frac{n}{4} \rfloor$ vertices. That is, $\beta_{2d}(n) \leq \lfloor \frac{n}{4} \rfloor$ for all \mbox{$n \geq 4$}.
\end{theorem}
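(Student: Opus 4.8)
The plan is to argue by strong induction on $n$, closely mirroring the proof of Theorem \ref{Thm:SufficiencyGuarding}; the only structural changes are that each recursive chunk we peel off now accounts for $4$ vertices rather than $5$, and that a peeled-off chunk is $2d$-covered by a single vertex rather than $2d$-guarded by a single guard. For the base cases $4 \le n \le 7$ we have $\lfloor n/4 \rfloor = 1$, so it suffices to note that a maximal outerplanar graph $H$ is $2d$-covered by a single vertex $v$ precisely when $V(H) \setminus N[v]$ is an independent set, and then to verify that such a $v$ exists for every triangulation of a $j$-gon with $j \le 7$. For $j \le 5$ this is immediate; for $j = 6$ every triangulation has a vertex of degree at least $4$ (no triangulation of a hexagon is $3$-regular, since three pairwise non-crossing diagonals cannot form a perfect matching of the six boundary vertices), so that vertex has at most one non-neighbour; and for $j = 7$ a dual-tree case split does it: if the triangulation has three ears it contains a vertex of degree $\ge 5$, while if it has two ears (a ``snake'') the apex $c$ of the middle triangle satisfies $V(H) \setminus N[c] \subseteq \{d,f\}$ with $d,f$ lying in the two quadrilaterals flanking the middle triangle, hence non-adjacent. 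These values would be recorded in a short table analogous to Table \ref{TAB:guarding}.

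For the inductive step, assume $n \ge 8$ and apply Lemma \ref{Lem:O'Rourke} with $k = 4$ (a larger $k$ would widen the window of cases and force extra structural subanalyses, so $k = 4$ is the natural choice here). This yields an interior edge $e = (0,m)$ splitting $G$ into maximal outerplanar graphs $G_1$ and $G_2$ sharing $e$, where $G_1$ carries $m \in \{4,5,6\}$ exterior edges of $G$, so $|V(G_1)| = m+1$ and $|V(G_2)| = n-m+1$. If $m \in \{5,6\}$, then $G_1$ has at most $7$ vertices and hence is $2d$-covered by one vertex by the base case, while $G_2$ has at most $n-4$ vertices and so, by the induction hypothesis, is $2d$-covered by $\lfloor (n-m+1)/4 \rfloor \le \lfloor n/4 \rfloor - 1$ vertices (when $G_2$ is a single triangle one uses the trivial bound $1 \le \lfloor n/4 \rfloor - 1$ directly). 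Since a $2d$-cover of the subgraph $G_i$ is still a $2d$-cover inside $G$, and $e$ is covered by the vertex chosen on $G_1$, these two covers together $2d$-cover $G$ with $\lfloor n/4 \rfloor$ vertices.

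The delicate case is $m = 4$, where $G_1$ is a pentagon one of whose boundary edges is $e$: a naive split would cost one vertex too many whenever $n \equiv 3 \pmod 4$. Here I would first establish the auxiliary fact that in every triangulation of such a pentagon at least one endpoint of $e$ -- vertex $0$ or vertex $m$ -- already $2d$-covers $G_1$ (it suffices to run through the five fan-triangulations of a pentagon). Placing the $2d$-covering vertex at that endpoint, which also belongs to $G_2$, Lemma \ref{Lem:f(m-1)Covering} lets the remainder of $G_2$, on $n-m+1 = n-3$ vertices, be $2d$-covered by $f(n-4) = \lfloor (n-4)/4 \rfloor = \lfloor n/4 \rfloor - 1$ further vertices; since this single endpoint simultaneously handles all of $G_1$ and the edge $e$, we again $2d$-cover $G$ with $\lfloor n/4 \rfloor$ vertices, closing the induction. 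The step I expect to be the main obstacle is exactly this $m = 4$ bookkeeping -- proving the pentagon claim and dovetailing it with Lemma \ref{Lem:f(m-1)Covering} so that the $n \equiv 3 \pmod 4$ case wastes no vertex -- together with making the $j = 7$ base case genuinely rigorous instead of hand-waving over triangulation shapes; everything else is a mechanical adaptation of the counting already done for Theorem \ref{Thm:SufficiencyGuarding}.
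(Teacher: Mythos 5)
Your proposal is correct, and it follows the paper's overall framework (strong induction, Lemma \ref{Lem:O'Rourke} to split off a small chunk, Lemma \ref{Lem:f(m-1)Covering} to absorb a shared vertex, plus a base-case table), but with a genuinely different choice of decomposition: you invoke Lemma \ref{Lem:O'Rourke} with $k=4$, giving the window $m\in\{4,5,6\}$ and base cases only up to $n=7$, whereas the paper takes $k=5$, window $m\in\{5,6,7,8\}$, and base cases up to $n=9$. The trade-off is where the hand-work lands. The paper's delicate cases are $m=7$ and $m=8$, resolved via the minimality of $m$ (which forbids certain internal edges and pins down the triangle incident to $e$) together with Lemma \ref{Lem:f(m-1)Covering}; your delicate case is $m=4$, where $G_1$ is a pentagon with $e$ on its boundary, and you resolve it with the claim that some endpoint of $e$ already $2d$-covers $G_1$ -- this is indeed true and needs no minimality assumption, since it can be checked directly over the five fan triangulations of a pentagon, after which placing that endpoint and applying Lemma \ref{Lem:f(m-1)Covering} to $G_2$ (on $n-3$ vertices) yields $1+\lfloor\frac{n-4}{4}\rfloor=\lfloor\frac{n}{4}\rfloor$, so no vertex is wasted even when $n\equiv 3 \pmod 4$. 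Your base cases also check out: the characterization ``$v$ $2d$-covers $H$ iff $V(H)\setminus N[v]$ is independent'' is correct, every hexagon triangulation has a vertex of degree at least $4$ (degree sum $18$ and no $3$-regular triangulation exists), and for the heptagon the three-ear case does force a vertex of degree $5$ (otherwise the four diagonals would form a non-crossing $4$-cycle bounding a quadrilateral face) while the two-ear case is settled by the apex of the middle triangle exactly as you describe; these small verifications, which you flagged, are the only details left to write out. Your route buys a smaller base-case table and avoids the paper's minimality-of-$m$ structural subcases, at the cost of the pentagon-endpoint lemma and a slightly more careful $j=7$ base case; also note you must keep the induction hypothesis in the form $f(n')=\lfloor\frac{n'}{4}\rfloor$ for all $4\le n'<n$ (as you do) so that Lemma \ref{Lem:f(m-1)Covering} can be fed, and handle the degenerate triangle $G_2$ (as you do).
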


\begin{proof}

For \mbox{$4 \leq n \leq 9$}, the truth of the theorem can be easily established -- the upper bounds are resumed in Table \ref{TAB:covering}. Note that for \mbox{$n=4$} the $2d$-covering vertex can be chosen randomly and for \mbox{$n=5$} it can be placed among the vertices of degree greater than 2.

\begin{table}[!htb]
\centering
\small
  \begin{tabular}{ | l | c | c | c | c | c | c | }
    \hline
    $n$ & 4  & 5  & 6  & 7  & 8  & 9  \\
    \hline \hline
    $\beta_{2d}(n)$ & 1  & 1  & 1  & 1  & 2  & 2  \\
    \hline
  \end{tabular}
  \caption{Number of  vertices that suffice to $2d$-cover a maximal outerplanar graph of $n$ vertices.}
  \label{TAB:covering}
\end{table}

Assume that \mbox{$n \geq 10$}, and that the theorem holds for \mbox{$n'<n$}. Lemma \ref{Lem:O'Rourke} guarantees the existence of an interior edge $e$ that partitions $G$ into maximal outerplanar graphs $G_1$ and $G_2$, where $G_1$ contains $m$ exterior edges of $G$ with $5 \leq m \leq 8$. Each value of $m$ will be considered separately.

\begin{enumerate}

\item[(1)] $m=5$. $G_1$ has $m+1=6$ exterior edges, thus $G_1$ can be $2d$-covered with one vertex. $G_2$ has $n-4$ exterior edges including $e$, and by the induction hypothesis, it can be $2d$-covered with $\lfloor \frac{n-4}{4} \rfloor = \lfloor \frac{n}{4} \rfloor -1$ vertices. Thus $G_1$ and $G_2$ together can be $2d$-covered by $\lfloor \frac{n}{4} \rfloor$ vertices.

\item[(2)] $m=6$. $G_1$ has $m+1=7$ exterior edges, thus $G_1$ can be $2d$-covered with one vertex. $G_2$ has $n-5$ exterior edges including $e$, and by induction hypothesis, it can be $2d$-covered with $\lfloor \frac{n-5}{4} \rfloor \leq \lfloor \frac{n}{4} \rfloor -1$ vertices. Thus $G_1$ and $G_2$ together can be $2d$-covered by $\lfloor \frac{n}{4} \rfloor$ vertices.

\item[(3)] $m=7$.  The presence of any of the internal edges (0,6), (0,5), (1,7) and (2,7) would violate the minimality of $m$. Thus, the triangle $T$ in $G_1$ that is bounded by $e$ is either (0,3,7) or (0,4,7). Since these are equivalent cases, suppose that $T$ is (0,3,7) as shown in Fig. \ref{FIG:article-arXix-13to14}(a). We distinguish two cases:

    \begin{enumerate}

        \item
         The degree of vertex 3 in the pentagon (3,4,5,6,7) is greater than 2. In this case vertex 3 is a vertex cover of $G_1$, and by induction hypothesis $G_2$ can be $2d$-covered with \mbox{$\lfloor \frac{n-6}{4} \rfloor \leq \lfloor \frac{n}{4} \rfloor-1$} vertices. Together with vertex 3 all of $G$ can be $2d$-covered by $\lfloor \frac{n}{4} \rfloor$ vertices.

        \item
         The degree of vertex 3 in the pentagon (3,4,5,6,7) is equal to 2. In this case vertex 7 $2d$-covers the pentagon (3,4,5,6,7). We consider graph $G^{*}$ that results from the union of $G_2$, triangle $T$ and quadrilateral (0,1,2,3). In this way, $G^{*}$ has \mbox{n-3} vertices. By lemma \ref{Lem:f(m-1)Covering} the vertex 7 permits the remainder of $G^{*}$ to be $2d$-covered by $f(n-3-1)$ additional vertices, where $f(n')$ is the number of vertices that are always sufficient to $2d$-cover a maximal outerplanar graph of $n'$ vertices. By the induction hypothesis $f(n')= \lfloor \frac{n'}{4} \rfloor$. Thus  $\lfloor \frac{n-4}{4} \rfloor = \lfloor \frac{n}{4} \rfloor - 1$ vertices are sufficient to $2d$-cover $G_1$. Together with vertex 7, all of $G$ is 2d-covered by $\lfloor \frac{n}{4}\rfloor$ vertices.

    \end{enumerate}

\begin{figure}[!htb]
  \centering
  \subfloat[]{\includegraphics[width=0.2\textwidth]{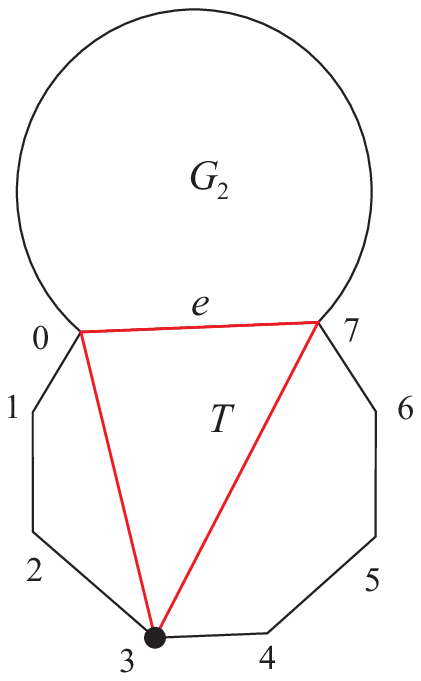}}
  \hspace{1cm}
  \subfloat[]{\includegraphics[width=0.2\textwidth]{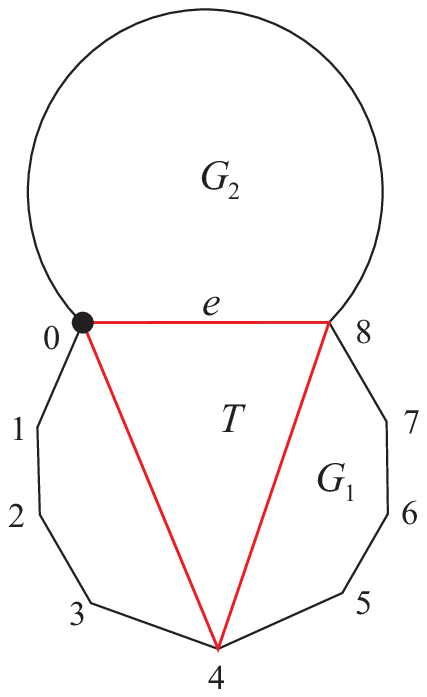}}
  \caption{The interior edge $e$ separates $G$ into two maximal outerplanar graphs $G_1$ and $G_2$: (a) $G_1$ has 8 exterior edges and the triangle $T$ in $G_1$ that is bounded by $e$ is (0,3,7); (b) $G_1$ has 9 exterior edges and the triangle $T$ in $G_1$ that is bounded by $e$ is (0,8,4).}
  \label{FIG:article-arXix-13to14}
  \label{FIG:article-arxiv-13to14}
\end{figure}

\item[(4)] $m=8$. $G_1$ has $m+1=9$ exterior edges, thus $G_1$ can be $2d$-covered with two vertices. We will consider two separate cases:

    \begin{enumerate}

        \item
        Vertices 0 and 8 have degree 2 in pentagons (0,1,2,3,4) and (4,5,6,7,8), respectively. Then vertex 5 $2d$-covers $G_1$. By the induction hypothesis $G_2$ can be $2d$-covered with \mbox{$\lfloor \frac{n-7}{4} \rfloor \leq \lfloor \frac{n}{4} \rfloor -1$} vertices. Thus $G$ can be $2d$-covered by $\lfloor \frac{n}{4} \rfloor$ vertices.

        \item
         Vertex 0 has degree greater than 2 in pentagon (0,1,2,3,4). In this case we place a guard at vertex 0 and  another guard in one vertex of the pentagon (4,5,6,7,8) whose degree is greater than 2. These two guards $2d$-cover $G_1$. $G_2$ has $n-7$ vertices. By lemma \ref{Lem:f(m-1)Covering} the  vertex 0 permits the remainder of $G_2$ to be 2d-covered by $f(n-7-1)=f(n-8)$ additional guards, where $f(n')$ is the number of $2d$-covering vertices that are always sufficient to $2d$-cover a maximal outerplanar graph with $n'$ vertices. By induction hypothesis $f(n')=\lfloor \frac{n'}{4} \rfloor$. Thus, $\lfloor \frac{n-8}{4} \rfloor = \lfloor \frac{n}{4} \rfloor-2$  vertices suffice to guard $G_2$. Together with the two allocated to $G_1$, all of $G$ is $2d$-covered by $\lfloor \frac{n}{4}\rfloor$ vertices.

\end{enumerate}

\end{enumerate}

\end{proof}

Now, we will prove that this upper bound is tight. The bold edges of the maximal outerplanar graph illustrated in Fig. \ref{FIG:article-arXiv-15} can only be $2d$-covered from different vertices, and therefore $\beta_{2d}(n) \geq \frac{n}{4}$.

\begin{figure}[!htb]
\centering
\includegraphics[width=0.5\textwidth]{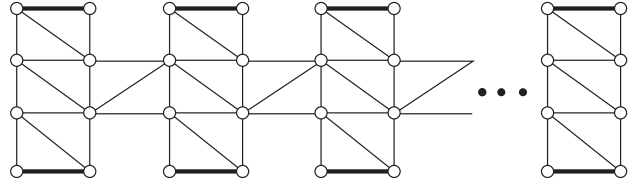}
\caption{A maximal outerplanar graph $G$ for which $\beta_{2d}(G)=\frac{n}{4}$.} \label{FIG:article-arXiv-15}
\end{figure}

As a conclusion,

\begin{theorem}
Every $n$-vertex maximal outerplanar graph with \mbox{$n \geq 5$} can be $2d$-covered by $\lfloor \frac{n}{4}\rfloor$ vertices. This bound is tight in the worst case.
\end{theorem}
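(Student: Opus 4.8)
The plan is to read off this statement as the combination of the sufficiency bound already established in Theorem~\ref{Thm:SufficiencyCovering} with a matching lower-bound construction. The ``always sufficient'' half, $\beta_{2d}(G)\le\lfloor n/4\rfloor$ for every $n$-vertex maximal outerplanar graph with $n\ge 5$, is nothing but Theorem~\ref{Thm:SufficiencyCovering} (which in fact covers $n\ge 4$), so no new work is needed there. What remains is to produce, for every relevant $n$, a maximal outerplanar graph $G$ on $n$ vertices with $\beta_{2d}(G)\ge\lfloor n/4\rfloor$; the family sketched in Fig.~\ref{FIG:article-arXiv-15} is the candidate. Once both inequalities hold we get $\beta_{2d}(G)=\lfloor n/4\rfloor$, i.e. tightness in the worst case.

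For the lower bound I would assemble $G$ from $\lfloor n/4\rfloor$ congruent gadgets of four vertices each, glued along a path-like spine as in Fig.~\ref{FIG:article-arXiv-15}, and designate in each gadget one exterior edge $e_i$ (the ``bold'' edge) whose two endpoints have degree exactly $2$. The heart of the argument is to check that the set of vertices $v$ for which there is a path of length at most $2$ through $e_i$ ending at $v$ is entirely contained in gadget $i$: then no single vertex can $2d$-cover two distinct bold edges $e_i,e_j$, so any $2d$-vertex cover uses at least one vertex per gadget and hence has size at least $\lfloor n/4\rfloor$. The steps, in order, are: (i) describe the gadget and how it attaches to the spine, and locate $e_i$ within it; (ii) enumerate the vertices reachable by a $\le 2$-edge path using $e_i$ and verify confinement to the gadget; (iii) conclude pairwise ``$2d$-cover-independence'' of the $\lfloor n/4\rfloor$ bold edges and thus the lower bound; (iv) handle $n\not\equiv 0\pmod 4$ by padding with a few vertices that do not reduce the number of independent bold edges, and dispatch the small cases $n=5,6,7$ using Table~\ref{TAB:covering}; (v) combine with Theorem~\ref{Thm:SufficiencyCovering}.

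The main obstacle is step (ii): one must ensure that a path of length $\le 2$ containing $e_i$ cannot ``escape'' gadget $i$ through the spine vertices it shares with its neighbours. This is exactly why the gadget needs four vertices rather than three — $e_i$ must be placed far enough from the attachment vertices that even a two-edge path starting along $e_i$ cannot touch an adjacent gadget, and choosing the apexes of the two triangles incident to $e_i$ to remain two steps away from any attachment vertex is what forces the constant $4$. This mirrors, on the extremal side, the role of the case split $5\le m\le 8$ in the proof of Theorem~\ref{Thm:SufficiencyCovering}; getting these gadget parameters right is the crux, the rest being bookkeeping.
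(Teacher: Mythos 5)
Your overall plan coincides with the paper's: take the upper bound directly from Theorem~\ref{Thm:SufficiencyCovering} and prove tightness with a Fig.~\ref{FIG:article-arXiv-15}-style family of four-vertex blocks, each carrying one bold edge that only vertices of its own block can $2d$-cover (for $k=2$ a vertex $v$ covers an edge $(x,y)$ exactly when $v\in N[x]\cup N[y]$, so bold edges pairwise at distance at least $3$ force $\lfloor \frac{n}{4}\rfloor$ distinct cover vertices). However, the gadget as you describe it cannot exist: in a maximal outerplanar graph with $n\geq 4$ no exterior edge can have \emph{both} endpoints of degree exactly $2$, since two adjacent degree-$2$ vertices have all their edges on the boundary cycle, which then closes up into a triangle, i.e.\ $n=3$. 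Moreover, an exterior edge lies on exactly one bounded triangle, so the ``apexes of the two triangles incident to $e_i$'' you want to control do not exist for an exterior bold edge (and if $e_i$ were interior, its endpoints would automatically have degree at least $3$). So step (i)/(ii) of your plan, taken literally, fails.

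The repair is small and is what the paper's example actually uses: the property you need is not ``both endpoints of degree $2$'' but confinement of $N[x]\cup N[y]$ to the block. For instance, label the boundary $a_1,b_1,c_1,d_1,a_2,b_2,c_2,d_2,\ldots,a_m,d_m$ with $n=4m$, triangulate each block with the chords $(a_i,c_i)$ and $(a_i,d_i)$, triangulate the remaining inner polygon on the vertices $a_i,d_i$ arbitrarily, and take $e_i=(b_i,c_i)$. Then $b_i$ has degree $2$, $c_i$ has degree $3$, and $N[b_i]\cup N[c_i]=\{a_i,b_i,c_i,d_i\}$ stays inside block $i$, so the $m=\frac{n}{4}$ bold edges need $m$ distinct cover vertices; your padding for $n\not\equiv 0\pmod 4$ and the small cases from Table~\ref{TAB:covering} then finish the argument exactly as in the paper.
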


\section{Conclusions and further research}
\label{SEC:Conclusions}

In this article we defined the concept of $kd$-guarding and formalized the distance $kd$-vertex cover. We showed that there is a relationship between $2d$-guarding, $2d$-dominating and $2d$-vertex cover sets on triangulation graphs. Furthermore, we proved tight bounds for \mbox{$n$-vertex} maximal outerplanar graphs: \mbox{$g_{2d}(n) = \gamma_{2d}(n) = \lfloor \frac{n}{5} \rfloor$} and \mbox{$\beta_{2d}(n)= \lfloor \frac{n}{4} \rfloor$}.

Regarding future research, we believe these bounds can be extended to any triangulation and are therefore not exclusive of maximal outerplanar graphs. Moreover, it would be interesting to study how these bounds evolve for $3d$-guarding, $3d$-dominating and $3d$-vertex cover sets. And of course study upper and lower bounds of the mentioned three concepts for any distance $k$. Finally, in the future we would like to study these distance concepts applied to other types of graphs, and not focus only on triangulations.

%%References
%%
%% Following citation commands can be used in the body text:
%% Usage of \cite is as follows:
%%   \cite{key}         ==>>  [#]
%%   \cite[chap. 2]{key} ==>> [#, chap. 2]
%%

%% References with bibTeX database:

%\bibliographystyle{elsarticle-num}
%\bibliography{<your-bib-database>}

%% Authors are advised to submit their bibtex database files. They are
%% requested to list a bibtex style file in the manuscript if they do
%% not want to use elsarticle-num.bst.

%% References without bibTeX database:

\end{document}